\theoremstyle{remark}
\newtheorem{remark}{Remark}
\theoremstyle{thm}
\newtheorem{thm}{Theorem}
\newcommand{\R}{{\mathbb{R}}}
\begin{document}

  \title{Adaptive gPCA: A method for structured dimensionality
    reduction}

\author{Julia Fukuyama}

\affil{Department of Statistics, Stanford University}

\maketitle

\begin{abstract}
  When working with large biological data sets, exploratory analysis
  is an important first step for understanding the latent structure
  and for generating hypotheses to be tested in subsequent
  analyses. However, when the number of variables is large compared to
  the number of samples, standard methods such as principal components
  analysis give results which are unstable and difficult to
  interpret. 

  To mitigate these problems, we have developed a method which allows
  the analyst to incorporate side information about the relationships
  between the variables in a way that encourages similar variables to
  have similar loadings on the principal axes. This leads to a
  low-dimensional representation of the samples which both describes
  the latent structure and which has axes which are interpretable in
  terms of groups of closely related variables.
  
  The method is derived by putting a prior encoding the relationships
  between the variables on the data and following through the analysis
  on the posterior distributions of the samples. We show that our
  method does well at reconstructing true latent structure in
  simulated data and we also demonstrate the method on a dataset
  investigating the effects of antibiotics on the composition of
  bacteria in the human gut.
\end{abstract}

\section{Introduction}

When analyzing biological data, we are often presented with a large
data matrix of interest along with side information about the
relationships between the variables in the data set. For example, in
microbiome data analysis, we have a data matrix containing abundances
of bacterial species as well as information about the phylogenetic
relationships between the bacteria. When analyzing transcriptome data,
we might have a data matrix with gene expression levels in the various
samples as well as information about which pathways the genes are
involved in. In light of this, many methods have been developed to
perform statistical analyses while taking into account the structure
of the variables. The fused lasso and its variations are often applied
to genomic data (\cite{tibshirani2008spatial, tibshirani2005sparsity,
  rinaldo2009properties}).  Kernel-penalized regression
(\cite{randolph2015kernel}) was developed to incorporate phylogenetic
structure into regression for microbiome data. The structure encoded
by gene networks has also been used to aid in classification of
microarray data (\cite{rapaport2007classification}), and regression
analysis of genomic data (\cite{li2008network}).

The current paper presents a new method for exploratory analysis of
such data which incorporates information about the relationships
between the variables. As motivation for why we might want to include
information about the relationships between the variables, consider
doing PCA on just the data matrix: we know that PCA is inconsistent
when the number of variables is much higher than the number of samples
(\cite{johnstone2012consistency}), which it usually is in modern
datasets. A common solution to the problem of inconsistency is to
assume that the principal axes are sparse and estimate them using a
regularized version of PCA which encourages sparse principal
axes. However, there are situations in which we do not expect
sparsity, but do expect other sorts of structure in the data. Our
method is designed to perform regularization in these situations.

From a more practical point of view, PCA is undesirable as an
exploratory method in situations where we have a large number of
variables because the variable loadings on the principal axes are
difficult to interpret: first of all, each axis is a linear
combination of all the variables, and moreover, the loadings of the
variables will not be structured according to our prior knowledge
about the relationships between them. Our method ensures that
variables which are similar to each other have similar axis
loadings. This leads to a more parsimonious explanation of the axes in
terms of groups of related variables, which should be more
interpretable and biologically relevant. In this regard it has a
similar aim as PCA with a fused lasso penalty on the variable loadings
(\cite{witten2009penalized}), although it works with more general
structures on the variables.

The layout of the paper is as follows: We first introduce a motivating
example in which incorporating outside information about the variables
is particularly important and which we will later use to illustrate
our method. We review generalized PCA, and then show how including a
prior in conjunction with the appropriate generalized PCA leads to our
new method, adaptive gPCA. To get a better understanding of adaptive
gPCA, we show how it is related to existing methods and we demonstrate
its performance on simulated and real data.

\section{Motivating example}\label{Sec:motivation}

The motivation for this work was our experience analyzing microbiome
data. In this paper, we will focus on one particular microbiome data
set, first published in \cite{dethlefsen2011incomplete}. The goal of
the study was to understand the effect of antibiotics on the
composition of bacteria in the human gut.  To this end, stool samples
were collected from each of three individuals before, during, and
after administration of two courses of the antibiotic
Ciprofloxacin. Between 52 and 56 samples were collected from each
individual for a total of 162 samples.

To understand what kinds of bacteria were present and at what
abundances, a certain highly variable segment of the 16S rRNA gene was
amplified by PCR and sequenced using next-generation sequencing. The
sequence of the variable segment of this gene was used as a proxy for
species. The species defined in this way are known in the microbiome
literature as operational taxonomic units (OTUs) and not species since
there is not necessarily a direct correspondence between them and
previously identified bacterial species. In the original analysis of
this data, OTUs were defined by clustering together sequences with at
least 95\% sequence identity using the Uclust software
(\cite{edgar2010search}), and the abundance of each OTU was defined as
the number of sequences mapping to the cluster. Clustering sequences
with at least 95\% sequence identity gave rise to a total of 2582
OTUs.

After defining OTUs in this way, the consensus sequence for each OTU
was mapped to a reference phylogenetic tree from the Silva 100
reference database (\cite{quast2013silva}). This mapping provides us
with the phylogenetic relationships between the bacteria corresponding
to the sequences that were obtained from the samples.

\subsection{The bacterial species problem}



No matter how we define OTUs, there is an underlying biological issue
in the definition of a bacterial species. Even today, there is a
division among microbiologists about whether bacterial species reflect
real underlying biology or whether they are primarily for scientists'
convenience. On the pro-species side, the ``ecotype'' theory described
in \cite{cohan2002bacterial} gives theoretical justification for why
we would expect to see groups of bacteria with much smaller within-
than between-group sequence divergence and why these are meaningful
biological units. The anti-species side of the debate cites as
evidence the large amount of lateral gene transfer and homologous
recombination as well as the amount of genetic dissimilarity within
groups traditionally defined as species. For an example of this type
of argument, see \cite{doolittle2006genomics}.

However, following Darwin who wrote that ``all true classification is
genealogical,'' most microbiologists agree on the usefulness of the
phylogenetic tree for describing the relationships between
bacteria. For example, \cite{brenner2005classification},
\cite{doolittle2006genomics}, \cite{cohan2002bacterial}, all agree on
this despite differing on the existence of bacterial
species. Therefore, to bring our statistical methods more in line with
biological understanding, methods that deal with bacterial species
should incorporate the phylogeny instead of implicitly assuming that
species are all equally distinct.


\subsection{Existing methods for incorporating phylogeny in microbiome
  data analysis}

Several methods have been proposed for including phylogenetic
information in exploratory data analysis. Some examples are double
principal coordinates analysis (DPCoA), which was originally described
in \cite{pavoine2004dissimilarities} as a method for incorporating
more general structure about the variables but which can accommodate
phylogenetic structure, weighted and unweighted Unifrac
(\cite{lozupone2005unifrac} and \cite{lozupone2007quantitative}) which
were developed specifically for microbiome data, a number of variants
of the Unifrac distances including generalized Unifrac
(\cite{chen2012associating}) and variance-adjusted weighted Unifrac
(\cite{chang2011variance}), and edge PCA
(\cite{matsen2013edge}). Unfortunately, many of these methods tend to
implicitly group together species at a very high taxonomic level,
which is not always desirable. Although a high-level grouping might
lead to good insights in some situations, in general we would like a
more flexible method where we can tune how coarse or fine of an
analysis to perform.

Another issue with many of the existing methods for incorporating the
phylogeny (in particular Unifrac and its variants) is that they are
distance-based and when they are applied in conjunction with
multi-dimensional scaling they give axes with no interpretation in
terms of the species. Since we are interested in dimensionality
reduction for hypothesis generation and for understanding the biology
underlying the structure we see in the data, it is important for the
method to also give insight into which species are responsible for any
clustering or gradients we see in the low-dimensional representation
of the samples. In contrast to most of the existing methods, the
procedure we introduce in this paper will use the phylogenetic
relationships between the bacterial species to give interpretations of
the axes in terms of groups of closely related species, which we
expect to be more easily interpretable and to lead to a better
understanding of the differences between microbial communities.

\subsection{Other properties of the antibiotic dataset}
The data set from \cite{dethlefsen2011incomplete} that we are
considering in this paper also has many of the features we discussed
in the introduction. We have 2582 variables (the abundances of the
species or OTUs) and only 162 samples, making the variable loadings
from PCA difficult to interpret and unreliable. We also do not expect
sparsity in the principal axes. The main divisions in the data are
samples from different individuals and samples taken during
administration of the antibiotic vs. not, and we do not expect either
of these divisions to be associated with changes in only a few
species. On the contrary, we expect the administration of the
antibiotic to change the relative abundances of nearly all of the
species, and we know from other microbiome studies that different
individuals have very different gut microbiome compositions at the
species level (see \cite{shade2012beyond}). On the other hand, we do
expect phylogenetically similar species to react in similar ways to
the antibiotic. For all of these reasons, we expect a method which
incorporates the phylogeny to be useful in understanding these data.

\section{Generalized PCA}

Before we introduce adaptive gPCA, we first review generalized PCA
(gPCA) and give some intuition about the kinds of solutions it
produces. Generalized PCA has already been used to create structured
low-dimensional data representations: For the particular case of
analyzing microbiome data with a phylogenetic tree, it was shown that
double principal coordinates analysis
(\cite{pavoine2004dissimilarities}), which we will look at in more
detail later, could be re-expressed as a gPCA
(\cite{purdom2011analysis}). In a rather different context, but also
for the purpose of incorporating the structure of the variables into
the analysis, the method for functional principal components
introduced in \cite{silverman1996smoothed} also has an interpretation
as PCA with respect to a non-standard inner product, or a generalized
PCA.

We follow the notation from the French multivariate
tradition in considering gPCA on a triple $(X, Q, D)$, where
$X \in \R^{n \times p}$ is our data matrix of $n$ samples measured on
$p$ variables, and $Q$ and $D$ are positive definite matrices with
$Q \in \R^{p \times p}$ and $D \in \R^{n \times n}$ (see
\cite{holmes2008multivariate} for a more thorough explanation). The
sample scores for gPCA on the triple $(X, Q, D)$ are the solutions to
the optimization problem
\begin{align}
\max_{u_i \in \R^n} &\quad u_i^T D XQX^T D u_i , \quad i = 1,\ldots, k
  \label{Eq:gpca_scores}\\
\text{s.t. } & \quad u_i^T D u_i  =1 , \quad i = 1,\ldots, k \nonumber\\
& \quad  u_i^T D u_j = 0, \quad 1 \le i < j \le k \nonumber
\end{align}
Similarly, the principal axes for gPCA on the triple $(X, Q, D)$ are
given by
\begin{align}
\max_{v_i \in \R^p} &\quad v_i^T Q X^T D XQ v_i , \quad i = 1,\ldots,
                      k \label{Eq:gpca_axes}\\
\text{s.t. } & \quad v_i^T Q v_i  =1 , \quad i = 1,\ldots, k \nonumber\\
& \quad  v_i^T Q v_j = 0, \quad 1 \le i < j \le k \nonumber
\end{align}

We can think of gPCA either as PCA in a non-standard inner product
space or as PCA on observations corrupted with non-spherical
noise. Both ways are informative and we review both here.

\subsection{Non-spherical noise}

Recall, following \cite{allen2014generalized}, that PCA can be
formulated as a maximum likelihood problem. Suppose that our observed
data is $X \in \R^{n \times p}$, and our model is
\begin{align*}
X &= U\Lambda V^T + E \\
E_{ij} &\overset{\text{iid}}{\sim} N(0, \sigma^2)
\end{align*}
where $U \in \R^{n \times k}$ and $V \in \R^{p \times k}$ are
orthogonal, and $\Lambda$ is diagonal. Then if the row scores,
principal axes, and variances of PCA on $X$ are given by
$\hat U \in \R^{n \times k}$, $\hat V \in \R^{p \times k}$, and
$\hat \Lambda \in \R^{k \times k}$, respectively, then the maximum
likelihood estimate of $U\Lambda V^T$ is
$\hat U \hat \Lambda \hat V^T$.

The generalized PCA solution is obtained when the elements of the
noise matrix $E$ are not independent and identically distributed. If
we change our model to
\begin{align*}
  X &\sim \mathcal{MN}_{n \times p}(U\Lambda V^T ,D^{-1}, Q^{-1})
\end{align*}
and if the row scores, principal axes, and variances of gPCA on the
triple $(X, Q, D)$ are given by $\hat U$, $\hat V$, and
$\hat \Lambda$, then the maximum likelihood estimate of $U\Lambda V^T$
is $\hat U \hat \Lambda \hat V^T$ (\cite{allen2014generalized}). This
allows us to account for more complicated error structures: we can
have correlation on the rows, on the columns, or both. The error
structure is not fully general --- it still must be separable --- but
this formulation allows for some dependence in the noise.

In practice, the assumption of normality of the errors may not be even
approximately true if our data is highly skewed or discrete, both of
which hold in our motivating example for those bacterial species with
low expected counts. In this case, we need to apply some sort of
transformation to the raw data so as to bring it more in line with our
assumptions. The correct transformation to use will depend on the data
in question, but for microbiome count data two common choices are to
use a started log transformation or to use the variance-stabilizing
transformation from the package DESeq2 (see \cite{mcmurdie2014waste}
and \cite{callahan2016bioconductor} for examples and the motivation
for this transformation). For the data analyzed in this paper, we
transform the counts using a started log transformation and remove
some of the bacterial species with particularly large fractions of
zero counts.

\subsection{Non-standard inner product}\label{Sec:ip}

The other way of thinking of gPCA on the triple $(X, Q, D)$ is simply
as PCA in a non-standard inner product space. Note that $Q$ and $D$,
being positive definite matrices, define inner products on $\R^p$ and
$\R^n$ in the following way:
\begin{align*}
\langle x, y \rangle_Q = x^T Q y, \quad x,y \in \R^p \\
\langle x,y \rangle_D = x^T D y, \quad x, y \in \R^n
\end{align*}
From the form of the gPCA problem as shown in (\ref{Eq:gpca_scores})
and (\ref{Eq:gpca_axes}), we see that gPCA is simply standard PCA with
the standard inner product replaced with the $Q$- and $D$- inner
product for the rows and columns respectively. In particular, gPCA of
the triple $(X, I, I)$ is equivalent to standard PCA.

To give some intuition into the reasons for and effects of working in
a non-standard inner product space, consider linear discriminant
analysis (LDA). In LDA, we have a (centered) data matrix
$X \in \R^{n \times p}$, and the samples fall into a set of $g$
groups. Suppose that we have weights for each sample, which are stored
on the diagonal of a matrix $D\in \R^{n \times n}$. Let
$Y \in \R^{n \times g}$ be an indicator matrix assigning samples to
groups, let $A \in \R^{g \times p}$ be a matrix containing the group
means for each of the $p$ variables, let $\Delta_Y = Y^T D Y$ be a
matrix containing the group weights, and let the within-group
covariance matrix be $W = (X - YA)^T D (X - YA)$.

With this notation, LDA can be written as gPCA on the triple
$(A, W^{-1}, \Delta_Y)$. We know that in LDA we want to find a
projection that maximizes the ratio of the between-class and the
within-class variance. We can think of this as LDA favoring
projections in directions of small within-class covariance, or
projections along axes $v$ for which $v^T W^{-1} v$ is
large. Analogously, if we have a more general gPCA on the triple
$(X, Q, D)$, we can think of the effect of the inner product matrix
$Q$ as favoring projections along axes $v$ for which $v^T Q v$ is
large.

In LDA, our inner product on the rows comes from the data, but we can
also imagine having an inner product on the rows which is based on
prior knowledge about the data. In what follows, we will choose an
inner product on the rows for which directions where similar variables
have similar scores are favored over directions in which similar
variables have dissimilar scores.

\begin{remark}
  Note that neither the correlated errors nor the non-standard inner
  product interpretation of gPCA are entirely satisfactory for the
  problem we want to solve. In our motivating example, we expect there
  to be axes which are both smooth on the tree and for which the
  projections of the samples have a large variance. 

  From the non-standard inner product interpretation, we know that we
  can design an inner product on the rows which will pull out axes
  with these properties. However, there are many ways to construct
  such inner product matrices and the non-standard inner product
  interpretation gives us very little insight into which one to
  choose.

  The other interpretation, in which we assume correlated errors, is
  also not quite right since it assumes structure in the error when we
  want to encode information about the structure of the signal. 
\end{remark}

\section{Adaptive gPCA}\label{Sec:adaptive}

In this section, we describe our proposal for incorporating prior
information about the structure of the variables. The basic idea is as
follows: We include a prior in our model which encodes our intuition
that the variables which are similar to each other should behave in
similar ways (in the case of microbiome data the idea is that species
close together on the tree will behave similarly). We perform
generalized PCA on the posterior estimate of each sample given the
data, taking into account the variance structure of the
posterior. Varying the scalings of the prior and noise variances gives
rise to a one-dimensional family of generalized PCAs which favor
progressively smoother solutions according to the structure of the
variables. Our method, adaptive gPCA, chooses which member of the
family to use by estimating the scalings of the signal and the noise
by maximum marginal likelihood.

\subsection{Data model}\label{Sec:agpcadatamodel}

Suppose we have a positive definite similarity matrix
$Q \in \R^{p \times p}$ (a kernel matrix) between the variables.  To
prevent scaling issues, assume that $\text{tr}(Q) = p$. Note that
since $Q$ is positive definite, it is also a covariance matrix, and a
random vector with covariance $Q$ will have stronger positive
correlations between variables which are more similar to each
other. For microbiome data with a phylogenetic tree, we will take $Q$
to be the matrix where $Q_{ij}$ represents the amount of shared
ancestral branch length between species $i$ and $j$. We use this
kernel matrix for several reasons, one of which is that it is the one
implicitly used in DPCoA; it is also related to the covariance of a
Brownian motion run along the branches of the tree.

With this in mind, consider the following model for our data
matrix $X$:
\begin{align}
\mathbf x_i \overset{\text{iid}}{\sim} N(\mathbf \mu_i, \sigma_2^2 I), \quad i = 1,\ldots, n\label{Eq:model2}\\
\mathbf \mu_i \overset{\text{iid}}{\sim} N(0, \sigma_1^2 Q), \quad i = 1,\ldots, n\label{Eq:model1}
\end{align}
Here we are simply including a prior in our model. The prior
incorporates information about the structure in our variables: since
the $\mu_i$'s have covariance equal to a scalar multiple of $Q$,
inference using this prior will allow us to regularize towards this
structure, or to smooth the data towards our expectation that similar
variables will behave in similar ways.

\subsection{PCA on Bayes estimates}
We are interested in the ``true'' values given in $\mu_i$ and not the
observed data $\mathbf x_i$, and so the appropriate next step is to
compute the posterior distribution of the the $\mu_i$'s and then
perform PCA on these posteriors. We can compute the posterior
distribution $\mathbf \mu_i \mid \mathbf x_i$ using Bayes' rule, which
is
\begin{align}
\mathbf \mu_i \mid \mathbf x_i = x \sim  N(\sigma_2^{-2} Sx, S)
\end{align}
with
\begin{align}
S = (\sigma_1^{-2} Q^{-1} + \sigma_2^{-2}I)^{-1}
\end{align}
Now we want to perform PCA on the posterior estimates of the
$\mathbf \mu_i$'s. We need to take into account the fact that the
posterior distributions for each $\mu_i$ have non-spherical variance,
and so we need to use gPCA instead of standard PCA. The method we use
to compute the sample scores and principal axes is described in the
following theorem:

\begin{thm} \label{Thm:pcabayes} The row scores from gPCA on the
  posterior estimates $\mu_i \mid \mathbf x_i$ from the model
  described in Section \ref{Sec:agpcadatamodel} are the same, up to a
  scaling factor, to the row scores from gPCA on $(X, S, I_n)$. The
  principal axes from gPCA on the posterior estimates are the same, up
  to a scaling factor, as the principal axes from gPCA on $(X, S,
  I_n)$ pre-multiplied by $S$. 
\end{thm}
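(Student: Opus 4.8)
The plan is to write down explicitly the two optimization problems being compared and show the first is a change of variables of the second. Recall that gPCA on the posterior estimates means gPCA on the triple $(\hat M, C^{-1}, I_n)$, where $\hat M$ is the matrix whose rows are the posterior means $\sigma_2^{-2} S \mathbf x_i$ and $C$ is the common posterior covariance $S$ — the inner product on the variables should be the inverse of the posterior covariance, by the non-spherical noise interpretation reviewed earlier, since the posteriors are the ``observations corrupted by non-spherical noise''. So the row-score problem is $\max u^T \hat M S^{-1} \hat M^T u$ subject to $u^T u = 1$, orthogonality, etc. Since $\hat M = \sigma_2^{-2} X S$ (stacking the rows), we get $\hat M S^{-1} \hat M^T = \sigma_2^{-4} X S S^{-1} S X^T = \sigma_2^{-4} X S X^T$. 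This is exactly $\sigma_2^{-4}$ times the operator $D X Q X^T D$ from \eqref{Eq:gpca_scores} with $D = I_n$ and $Q$ replaced by $S$. Since the constraints in both problems are identical ($u^T u = 1$, $u^T u_j = 0$), the maximizers coincide, and the scalar $\sigma_2^{-4}$ only rescales the objective value, not the argmax. That gives the first claim, with the ``scaling factor'' being an artifact of not normalizing $Q$ and of the $\sigma_2^{-2}$ in the posterior mean.

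For the principal axes, I would carry out the analogous computation. The axes from gPCA on the posteriors solve $\max w^T S^{-1} \hat M^T \hat M S^{-1} w$ subject to $w^T S^{-1} w = 1$ (the constraint uses the same inner product $S^{-1}$). Substituting $\hat M = \sigma_2^{-2} X S$ gives objective $\sigma_2^{-4} w^T S^{-1} S X^T X S S^{-1} w = \sigma_2^{-4} w^T X^T X w$, which does not match \eqref{Eq:gpca_axes} for $(X, S, I_n)$ directly — that problem is $\max v^T S X^T X S v$ s.t. $v^T S v = 1$. The resolution is the substitution $w = S v$: then the posterior-axis constraint $w^T S^{-1} w = 1$ becomes $v^T S v = 1$, and the objective $\sigma_2^{-4} w^T X^T X w$ becomes $\sigma_2^{-4} v^T S X^T X S v$, which is $\sigma_2^{-4}$ times the $(X, S, I_n)$ axis objective. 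Hence $w_i = S v_i$ (up to the global scalar), i.e. the posterior axes are the $(X, S, I_n)$ axes pre-multiplied by $S$, exactly as stated.

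The one subtlety — and the place I would be most careful — is pinning down the correct inner-product/covariance bookkeeping for ``gPCA on the posteriors'': one must be consistent about whether the $D$-side is $I_n$ (the posteriors for distinct $i$ are independent, and we are not reweighting samples) and that the $Q$-side inner product is $S^{-1}$ rather than $S$ (inverse posterior covariance, matching the maximum-likelihood derivation of gPCA under $X \sim \mathcal{MN}(U\Lambda V^T, D^{-1}, Q^{-1})$). Once that is fixed, everything is the substitution $\hat M = \sigma_2^{-2} X S$ together with the linear reparametrization $w = Sv$; the rest is the routine cancellation $S S^{-1} = I$. I would also remark that since $S = (\sigma_1^{-2} Q^{-1} + \sigma_2^{-2} I)^{-1}$ is symmetric positive definite, it is a legitimate inner-product matrix and invertible, so all the manipulations above are valid, and the ``scaling factor'' in the statement can be made explicit as a power of $\sigma_2$ (times any normalization one imposes on $S$).
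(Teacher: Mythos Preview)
Your proposal is correct and follows essentially the same route as the paper: identify the posterior gPCA triple as $(\sigma_2^{-2}XS,\,S^{-1},\,I_n)$ via the non-spherical-noise interpretation, then compare to $(X,S,I_n)$ by cancelling $S S^{-1}$ for the row scores and by the substitution $w=Sv$ for the principal axes. If anything, you are slightly more explicit than the paper (which simply asserts the row-score equivalence ``can be verified by plugging both sets of variables into the optimization problem''), but the argument is the same.
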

\begin{proof}
See appendix. 
\end{proof}

From this theorem, we see that when we perform gPCA on the posterior
estimates obtained from the model described in Section
\ref{Sec:agpcadatamodel}, different scalings of the prior and the
noise variances simply lead to gPCAs with different row inner product
matrices.

\subsection{A family of gPCAs}

Now we can explore the family of inner product matrices which our
model gives rise to. Up to a scaling factor, the matrix
$S = (\sigma_1^{-2} Q^{-1} + \sigma_2^{-2}I)^{-1}$ depends only on the
relative sizes of $\sigma_1$ and $\sigma_2$, the scalings for the
prior and the noise. We therefore have a one-dimensional family of
gPCAs determined by the relative sizes of $\sigma_1$ and
$\sigma_2$. To get some insight into this family, we can first
consider the endpoints.

As $\sigma_1 / \sigma_2 \to 0$, that is, as the noise becomes very
small compared to the prior structure, $S$ becomes more and more like
a scalar multiple of the identity, and so we approach a scalar
multiple of gPCA on the triple $(X, I, I)$, or standard PCA. At the
other end, as $\sigma_2 / \sigma_1 \to 0$, we approach a scalar
multiple of gPCA on the triple $(X, Q, I)$. The gPCA on $(X, Q, I)$
turns out to be very closely related to double principal coordinates
analysis (DPCoA, originally described in
\cite{pavoine2004dissimilarities}), which is another method for
incorporating information about the variables into the analysis. We
will describe DPCoA and its relationship with our method further in
Section \ref{Sec:relationshipsdpcoa}, but for now it suffices to note
that this family of gPCAs can be thought of as interpolating between
DPCoA and standard PCA or as giving us a tunable parameter controlling
how smooth we want the principal axes to be.

We might also wonder why this family is better than other families we
might consider. A possibly more natural method would be one where we
add a ridge penalty to $Q$, resulting in gPCA on
$(X, Q + \lambda I, I)$.  This family has the same endpoints as the
family we have described: when $\lambda = 0$ we have gPCA on
$(X, Q, I)$, and as $\lambda \to \infty$ we get standard PCA. The
difference between the two is the path between the two endpoints. Very
roughly, when we add a ridge penalty to $Q$, the main effect is to
increase the small eigenvalues, but when we add a ridge penalty to
$Q^{-1}$ we make the large eigenvalues more similar to each other. In
general, the small eigenvalues of $Q$ correspond to eigenvectors that
are very rough (the values are very different for variables which are
similar to each other), while the large eigenvalues correspond to
eigenvectors that are smooth. When we do structured dimensionality
reduction, we are almost always going to want to dampen any variance
along rough eigenvectors, but we don't necessarily prefer variance in
the direction of an extremely smooth eigenvector over variance in the
direction of a mostly-smooth eigenvector. When we use $Q + \lambda I$,
we remove the dampening on the rough directions, but when we use
$S = (\sigma_1^{1} Q^{-1} + \sigma_2^{-2} I)^{-1}$ we keep the
eigenvalues of the rough directions small and decrease the difference
between eigenvalues of smooth eigenvectors.

\subsection{Automatic selection of family member}

So far, we have been assuming that $\sigma_1$ and $\sigma_2$ are
known, but this is not generally going to be the case. It is possible
to choose values for the two purely subjectively, based on how heavily
you want to weight your prior knowledge about the variables compared
to the actual data. However, if choosing subjectively is not
appealing, the structure of the model suggests that we can estimate
the values $\sigma_1$ and $\sigma_2$ from the data itself by maximum
marginal likelihood. To be more concrete, according to our data model
we have
\begin{align}
\mathbf x_i \overset{\text{iid}}{\sim} N(0, \sigma_1^2 Q + \sigma_2^2 I)
\end{align}
The overall log likelihood of the data is therefore (up to a constant factor)
\begin{align}
  \ell(X; \sigma_1, \sigma_2) = -\frac{n}{2}\log |\sigma_1^2 Q + \sigma_2^2
  I| - \sum_{i=1}^n\frac{1}{2} \mathbf x_i^T
  (\sigma_1^2 Q + \sigma_2^2 I)^{-1} \mathbf x_i \label{Eq:loglik}
\end{align}
Maximizing this likelihood is not a convex problem and there does not
appear to be a closed-form solution, but it is possible to transform
it into a problem of optimizing one parameter over the unit
interval. To do this, we introduce some new notation. Let $r =
\sigma_1^2 / (\sigma_1^2 + \sigma_2^2)$, and let $\sigma^2 =
\sigma_1^2 + \sigma_2^2$. Let $Q = V\Lambda V^T$ be the
eigendecomposition of $Q$ where $V$ is an orthogonal matrix and
$\Lambda$ is diagonal containing the eigenvalues $\lambda_1, \ldots,
\lambda_p$. Finally, let $\mathbf {\tilde x}_i = V^T \mathbf x_i$ and
$\tilde x_{ij}$ be the $j$th element of $\tilde {\mathbf x}_i$. The
log likelihood in the new parameterization is 
\begin{align}
\ell(X; r, \sigma) &= -\frac{np}{2} \sigma^{2} \log |r Q + (1 -r)I| -
  \sigma^{-2}\sum_{i=1}^n \frac{1}{2} \mathbf x_i^T (r Q + (1 -r)I)
  \mathbf x_i \\
&= -\frac{np}{2} \sigma^{2} \sum_{j=1}^p \log(r \lambda_j + 1-r) -
  \sigma^{-2} \sum_{i=1}^n \sum_{j=1}^p \frac{1}{2} \frac{\tilde
  x_{ij}^2}{r \lambda_j  + 1-r}
\end{align}
Based on the expression above, we can find a closed-form solution for
the maximizing value of $\sigma^2$ for any fixed $r$. This gives us
\begin{align}
{\sigma^{2}}^*(r) = \frac{1}{np}\sum_{i=1}^n \sum_{j=1}^p \tilde x_{ij}^2 / (r \lambda_i + 1
  -r)
\end{align}
We can then re-write the likelihood as a function of $r$ only. This is
still not convex and does not have a closed-form solution, but since
we now have only one parameter which lies on the unit interval, the
optimization can be performed numerically.

\begin{remark}
  We can get some insight into what sorts of solutions adaptive gPCA
  will choose by considering some extreme cases. First, consider the
  case where the covariance of $X$ is equal to $Q$. In this case, the
  automatic method will set the noise scaling $\sigma_2$ equal to 0,
  which corresponds to gPCA on $(X, I, I)$ or standard PCA. On the
  other hand, if the covariance of $X$ is spherical, the prior or
  signal scaling will be set equal to zero, corresponding to gPCA on
  $(X, Q, I)$. Therefore, when the marginal covariance is already
  structured according to the prior information on the variables, we
  don't do any regularization towards the prior structure. On the
  other hand, when it doesn't seem like the marginal covariance is
  structured according to the prior information on the variables, we
  do the maximum amount of regularization towards the prior
  structure. We can think of this as trying to balance the competing
  objectives of obtaining a gPCA plot which reflects the directions of
  maximum variation in the data and one which gives similar variables
  similar axis loadings. 
\end{remark}

\begin{remark}[Choice of $Q$]
  $Q$ can be any positive definite kernel matrix between the
  variables. A kernel matrix is often a natural way to encode
  relationships between variables: for example, if the variables are
  the nodes in a graph, there are many graph kernels available to
  describe the similarities between the nodes, mostly based on the
  graph Laplacian. For some examples, see \cite{kondor2002diffusion}.

  If we start off with Euclidean distances between variables instead
  of similarities, a natural way to create a kernel matrix is as
  follows: Suppose $\delta \in \R^{p \times p}$ is a matrix with the
  squared distances between the variables, and let
  $P = I - \mathbf 1_p \mathbf 1_p^T / p$ be the centering
  matrix. Then, if the distances implied by $\delta$ are Euclidean,
  $-P \delta P$ is a positive definite similarity matrix. This matrix
  contains the inner products between points if they are embedded in
  $\R^p$ such that the distances between them match the distances
  implied by $\delta$ and they are centered around the origin.
\end{remark}

\subsection{Adaptive gPCA} 
Putting everything together, we have the following method. We start
out with a data matrix $X \in \R^{n \times p}$ and either a kernel
matrix $Q \in \R^{p \times p}$ containing similarities between the
variables or a matrix $\delta \in \R^{p \times p}$ containing the
squared distances between the variables (we assume the set of
distances is Euclidean). We perform the following steps:
\begin{enumerate}
\item If we started with distances between the variables, set
  $Q = P (-\delta/2) P$. Since the distances are Euclidean, this
  definition of $Q$ gives a positive definite kernel matrix. Otherwise
  use the kernel matrix provided.
\item Find $\sigma_1$ and $\sigma_2$ which maximize the likelihood
  function in equation (\ref{Eq:loglik}) corresponding to the model in
  (\ref{Eq:model1})-(\ref{Eq:model2}). 
\item Let $S = (\sigma_1^{-2} Q^{-1} + \sigma_2^{-2} I)^{-1} $.
  Perform gPCA on the triple $(X, S, I)$.  The sample scores for
  adaptive gPCA are given by the row scores of this gPCA, and the
  variable scores for adaptive gPCA are given by the column scores of
  this gPCA pre-multiplied by $S$.
\end{enumerate}

To understand why this method encourages principal axes with variable
loadings which are similar for variables which are similar to each
other, recall the description in Section \ref{Sec:ip} of LDA as a gPCA
on $(A, W^{-1}, D)$ (where $A$ is a matrix of group means, $W$ is the
within-class covariance matrix, and $D$ is a diagonal weight
matrix). The interpretation here is that the discriminant vectors $v$
are encouraged to be in directions where the within-class covariance
is small, or $v^T W^{-1}v$ is large. Similarly, gPCA on $(X, S, I)$
will encourage principal axes $v$ for which $v^T S v$ is large. Since
$S$ has the same eigenvectors with the same ordering of eigenvalues as
$Q$, the similarity matrix for the variables, this is the same as
encouraging principal axes $v$ which have similar loadings for
variables which are similar to each other.

\section{Relationship with DPCoA}\label{Sec:relationshipsdpcoa}

The family of gPCAs given by our method can be thought of as bridging
the gap between standard PCA and another method for incorporating
information on the structure of the variables, double principal
coordinates analysis (DPCoA), originally described in
\cite{pavoine2004dissimilarities}. Briefly, DPCoA is a method for
giving a low-dimensional representation of ecological count data
(generally the abundance of species at several sampling sites) taking
into account information about the similarities between species. DPCoA
starts with a matrix of Euclidean distances between the species and
the counts of each species at each sampling site. To obtain the DPCoA
ordination, we perform the following steps:
\begin{enumerate}
\item Perform a full multi-dimensional scaling on the species.
\item Place each sampling site at the center of mass of the species
  vector corresponding to that site. 
\item Perform PCA on the matrix of sampling site coordinates, and
  project both the sampling site points and the species points onto
  the PCA axes. 
\end{enumerate}
DPCoA was later shown to be equivalent to gPCA using a certain
non-standard inner product in \cite{purdom2011analysis} for the
special case of tree-structured variables, and it can be shown to be
equivalent to a gPCA given any Euclidean distance structure on the
variables.  The relationship between DPCoA with Euclidean distances
between the variables and gPCA is given in the following theorem.
\begin{thm} \label{Thm:dpcoagpca}
  Suppose we have a count matrix $X \in \R^{n \times p}$ and a set of
  Euclidean distances between the $p$ variables. We construct a matrix
  $\delta \in \R^{p \times p}$ containing the squares of the distances
  between the variables. Let  $w_L = X\mathbf 1 / \mathbf 1^T X \mathbf 1$,
  $w_S = X^T \mathbf 1 / \mathbf 1^T X \mathbf 1$, and for any weight
  vector $w$ let $P_w = I - \mathbf 1 w^T$ and $D_w$ denote the
  diagonal matrix with $w$ on the diagonal. Then:
\begin{enumerate}
\item The row scores from DPCoA on $X$ using the distances implied by
  $\delta$ are the same (up to a sign change) as the row scores
  obtained from gPCA on $(D_{w_L}^{-1}X P_{w_S}, P_{w_S}(-\delta / 2) P_{w_S}, D_{w_L})$.
\item If the column scores from gPCA on
  $(D_{w_L}^{-1}X P_{w_S},P_{w_S}(-\delta / 2) P_{w_S}, D_{w_L})$ are
  given by $Z$, then the column scores from DPCoA on $X$ using the
  distances implied by $\delta$ are the same (up to a sign change) as
  $P_{w_S}(-\delta / 2) P_{w_S}Z$.
\end{enumerate}
\end{thm}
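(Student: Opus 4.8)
The plan is to show that the DPCoA procedure, step by step, is literally a gPCA on the stated triple, by unwinding each of the three DPCoA steps and matching them to the eigenproblem \eqref{Eq:gpca_scores}. The key observation driving everything is that the matrix $B := P_{w_S}(-\delta/2)P_{w_S}$ is exactly the matrix of centered inner products of a Euclidean embedding of the $p$ variables: classical multidimensional scaling of a squared-distance matrix $\delta$ produces coordinates $Y \in \R^{p \times p}$ (up to orthogonal rotation) with $YY^T = B$, where here the centering is done with respect to the species weights $w_S$ rather than uniform weights, which is why the centering matrix is $P_{w_S} = I - \mathbf 1 w_S^T$ rather than the usual $P$. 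So step 1 of DPCoA gives us such a $Y$; step 2 places sampling site $i$ at the $w_S$-weighted center of mass of the species vectors it contains, which in matrix form is the row $i$ of $D_{w_L}^{-1} X Y$ (the $D_{w_L}^{-1}$ normalizing each row of $X$ to sum to one, so that each site coordinate is a genuine convex combination, or close to it after the $P_{w_S}$ recentering); and step 3 is ordinary PCA on these site coordinates, i.e.\ PCA on the matrix $M := D_{w_L}^{-1} X P_{w_S} Y$, using site weights $D_{w_L}$.

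First I would make precise what ``PCA on $M$ with site weights $D_{w_L}$'' means: it is gPCA on the triple $(M, I_p, D_{w_L})$, whose row scores solve $\max u^T D_{w_L} M M^T D_{w_L} u$ subject to $u^T D_{w_L} u = 1$. Now substitute $M = D_{w_L}^{-1} X P_{w_S} Y$ and compute $M M^T = D_{w_L}^{-1} X P_{w_S} Y Y^T P_{w_S}^T X^T D_{w_L}^{-1}$. Here is the crux: $Y Y^T = B$, but $B = P_{w_S}(-\delta/2)P_{w_S}$ already, and because $B$ is $w_S$-centered one checks $P_{w_S} B P_{w_S}^T = B$ (the centering is idempotent in the appropriate sense, using $P_{w_S}^T \mathbf{1}$ annihilation or $w_S^T P_{w_S} = 0$). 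Hence $M M^T = D_{w_L}^{-1} X B X^T D_{w_L}^{-1}$, and with $\tilde X := D_{w_L}^{-1} X P_{w_S}$ we similarly get $\tilde X B \tilde X^T = D_{w_L}^{-1} X B X^T D_{w_L}^{-1}$ by the same absorption of $P_{w_S}$ into $B$. Therefore the row-score eigenproblem for PCA-on-$M$ coincides exactly with the gPCA row-score eigenproblem for the triple $(\tilde X, B, D_{w_L}) = (D_{w_L}^{-1} X P_{w_S},\, P_{w_S}(-\delta/2)P_{w_S},\, D_{w_L})$, which is claim (1); the sign ambiguity is just the usual sign freedom of eigenvectors.

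For claim (2), I would recall how DPCoA projects the species points: in step 3 the species point $j$ (coordinate vector $Y_{j\cdot}$) is projected onto the PCA axes of the site cloud, which by the duality between row and column scores of a PCA/SVD means the species scores are $Y$ times the (appropriately normalized) principal axes of $M$. On the gPCA side, the column scores $Z$ of the triple $(\tilde X, B, D_{w_L})$ are related to the row scores by the standard gPCA transition formulas, and the claim is that the DPCoA species scores equal $B Z = P_{w_S}(-\delta/2)P_{w_S} Z$. The argument is to write both in terms of the common SVD: the gPCA column scores $Z$ satisfy $\tilde X^T D_{w_L} u_i \propto B^{-1}$-normalized versions of $Z$-columns (care is needed if $B$ is only positive semidefinite, in which case one works on its range), and $Y Z$ then picks out exactly the DPCoA species projections. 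Multiplying through by the identity $Y Y^T = B$ converts $Y(\cdot)$ into $B(\cdot)$ on the relevant subspace, giving $BZ$. The only real subtlety, and the step I expect to be the main obstacle, is bookkeeping the non-uniform weighting consistently: one must verify that the DPCoA ``center of mass'' really does correspond to the $w_S$-centering baked into $B$ and the $w_L$-weighting baked into the gPCA metric $D_{w_L}$, and that the projection step in DPCoA uses exactly the inner product under which the gPCA transition formulas hold. Handling the possible rank-deficiency of $\delta$-derived $B$ (so that $B^{-1}$ is only a pseudo-inverse on the column space) is a secondary technical point but does not change the algebra on the range of $B$. None of the remaining manipulations are more than routine linear algebra once these identifications are pinned down.
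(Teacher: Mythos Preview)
Your proposal is correct and follows essentially the same route as the paper: unwind the three DPCoA steps, identify the MDS coordinates $Y$ (the paper's $Z$) via $YY^T = P_{w_S}(-\delta/2)P_{w_S}^T$, compute the site-barycenter matrix, and then observe that the row-score eigenproblem for weighted PCA on the barycenters coincides with the gPCA row-score eigenproblem on the stated triple, using idempotency of $P_{w_S}$ to absorb the extra centering. For part~(2) the paper carries out explicitly the transition-formula computation you sketch, writing the DPCoA species scores as $ZM$ and reducing them to $QA$ via the generalized SVD relations $Y = L\Lambda^{1/2}M^T$ and $\tilde X = B\Psi^{1/2}A^T$; your outline (``$YY^T = B$ converts $Y(\cdot)$ into $B(\cdot)$'') is the same computation in compressed form. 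One point worth tightening: the MDS in DPCoA is the \emph{weighted} version (eigendecomposition of $D_{w_S}^{1/2}P_{w_S}(-\delta/2)P_{w_S}^T D_{w_S}^{1/2}$, then $Z = D_{w_S}^{-1/2}U\Lambda^{1/2}$), not classical MDS with the centering swapped in afterward; this is exactly the bookkeeping you flag as the main obstacle, and the paper resolves it by building the $w_S$-weighting into the MDS from the start so that $ZZ^T = P_{w_S}(-\delta/2)P_{w_S}^T$ holds on the nose.
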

\begin{proof}
See the appendix. 
\end{proof}

DPCoA was developed for count data, and in the French multivariate
tradition count data is typically analyzed by transforming the counts
into relative abundances and retaining the column and row sums as
weightings on the rows and columns (see, for example, the section on
correspondence analysis in \cite{holmes2008multivariate}). The row and
column sums need to be retained and used as weights since they give
the precision with which we know the relative abundance vectors for
each location. Therefore, in the gPCA formulation of DPCoA, we use
centering matrices which are weighted according to the variable
weights ($P_{w_S}$) and use an inner product on the columns which
weights the rows according to their counts ($D_{w_L}$). However, with
the more general kinds of data we are considering in this paper, we
will not necessarily have a measure of the precision with which the
variables are measured, and the natural adaptation of the method to
non-count data would be to weight all the variables equally. This
means setting $w_S = \mathbf 1 / p$ and $w_L = \mathbf 1/ n$. In this
case, the gPCA triple becomes $(XP, P(-\delta / 2)P, I)$ (with
$P = I - \mathbf 1 \mathbf 1^T /p$, a centering matrix). The inner
product matrix here is the limiting inner product matrix in our family
of gPCAs as $\sigma_2 / \sigma_1 \to 0$, and the data matrix is simply
a standard centered data matrix. Thus, we see that a small
modification of DPCoA adapting it to non-count data is equivalent to
one of the endpoints in our family of gPCAs.

\section{Simulation results}

To evaluate the performance of adaptive gPCA, we simulated data from
models in which we would hope for it to perform well. To match our
motivating example of microbiome abundance data with information about
the phylogenetic relationships between the bacteria, we suppose that
the variables are related to each other by a phylogenetic tree. We
used a random tree (using the function \texttt{rtree} in the
\texttt{ape} package \cite{paradis2004ape} in R) for the relationship
between the variables, and the similarity matrix
$Q \in \R^{p \times p}$ we use to encode the information about the
tree structure is defined as follows:
\begin{align}
Q = \mathbf 1 s^T + s \mathbf 1^T - \delta\label{Eq:Qdef}
\end{align}
where $s \in \R^p$ gives the distance between each leaf node and the
root and $\delta \in \R^{p \times p}$ gives the distance on the tree
between the leaf nodes. This definition gives us a matrix $Q$ with
$Q_{ij}$ proportional to the amount of shared ancestry between nodes
$i$ and $j$, and it is also equal to the covariance matrix of a
Brownian motion on the phylogenetic tree. For our two simulation
experiments, we will compare adaptive gPCA using $Q$ as the similarity
matrix to standard PCA and gPCA on $(X, Q, I)$, which is intended to
be a slight extension of DPCoA to real-valued data. 

\subsection{Simulation A}
For the first simulation, we generate our data matrix as rank-one plus
noise, and we ensure that the coefficients of the principal axis are
smooth on the tree. More specifically, we generate our data matrix
$X \in \R^{n \times p}$ as follows:
\begin{align}
X &= uv^T + E \\\label{Eq:X}
E_{ij} &\overset{\text{i.i.d.}}{\sim} N(0, \sigma^2), \quad i = 1,
         \ldots, n, \; j = 1, \ldots, p \\\label{Eq:E}
u_i & \overset{\text{i.i.d.}}{\sim} N(0,1), \quad i = 1,\ldots, n \\\label{Eq:u}
v & \sim N(\mathbf 0_p, V_{(m)}V_{(m)}^T)
\end{align}
where $V_{(m)} \in \R^{p \times m}$ denotes the matrix whose columns
are the top $m$ eigenvectors of $Q$. The value of $m$ governs how
smooth $v$ is: if $m$ is small, $v$ tends to have coefficients which
are very smooth and exhibit long-range positive dependence on the
tree, and as $m$ increases the coefficients get more and more
rough. At the extreme case of $m = p$, $V_{(m)}V_{(m)}^T = I_p$, and
so there is no relationship at all between the coefficients of $v$ and
the tree structure.

We compare adaptive gPCA to standard PCA and gPCA on $(X, Q, I)$
(intended to be similar to DPCoA), looking at the correlations between
the true and estimated scores and principal axes. We vary both $m$
(controlling the smoothness of the principal axis on the tree) and
$\sigma$ the error noise. The results are shown in Figure
\ref{Fig:SimA}. We see that both standard PCA and adaptive gPCA do a
perfect job at recovering both the principal axis and the scores when
there is no noise, while gPCA on $(X, Q, I)$ does poorly at recovering
the principal axis unless there is very strong long-range dependence
in the coefficients of the principal axis (corresponding to $m = 1$ in
the left-most column). The performance of all the methods degrades
with increasing noise, but the performance of adaptive gPCA falls off
less quickly than the performance of PCA when there is at least a
moderate amount of smoothness in the coefficients of the principal
axis.

\begin{figure}[h]
\includegraphics[width=\textwidth]{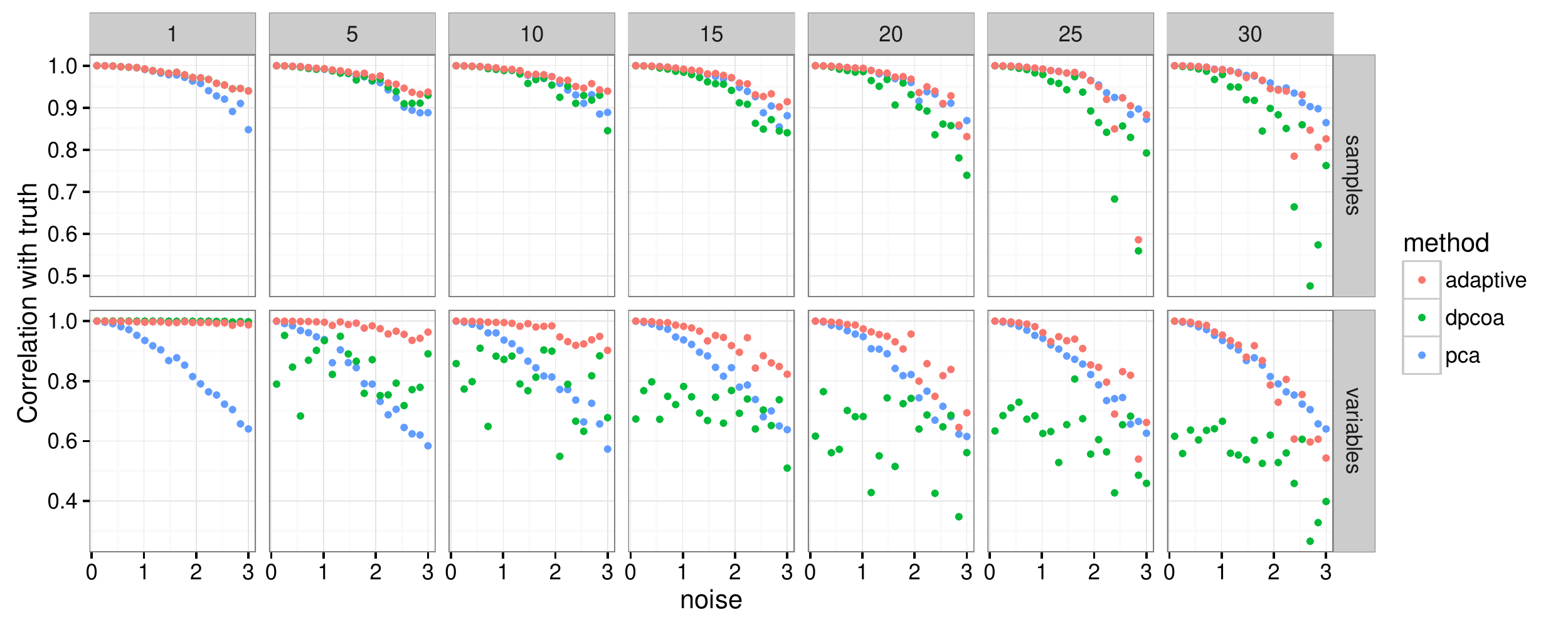}
\caption{Results from simulation A. Correlations between the true and
  estimated principal axis (top) and true and estimated scores
  (bottom) for different values of $m$ (columns, see text for
  explanation of $m$). }\label{Fig:SimA}
\end{figure}

\subsection{Simulation B}

The second simulation is similar to the first, with the difference
being how the principal axis is generated. Our data matrix $X$ is
again simulated as rank one plus noise, and $X$, $E$, and $u$ follow
the relations on lines (\ref{Eq:X}), (\ref{Eq:E}), and
(\ref{Eq:u}). The difference is in how we create the principal
axis. For any branch $b$ in the phylogenetic tree, let
$\mathbf I_b \in \R^p$ be the indicator vector of the leaf nodes which
descend from $b$. Our principal axes $v$ are then defined as
\begin{align}
v &= \mathbf I_b / \sqrt{\mathbf I_b^T \mathbf 1}
\end{align}
We generated data matrices $X$ according to this scheme, varying both
$\sigma$ (the variance of the noise term) and $b$. We did one
simulation for each branch $b$ which has between 50 and 200 leaf nodes
as descendants. As before, we computed the correlation between the
true and estimated principal axis and the true and estimated sample
scores along the principal axis, and the results are shown in Figure
\ref{Fig:SimB}. In this simulation, we see that gPCA on $(X, Q, I)$
does the best when the number of leaf nodes associated with the
principal axis is high. Adaptive gPCA consistently outperforms both
gPCA on $(X, Q, I)$ and standard PCA in this setup, with the
performance not dropping off as quickly as standard PCA does in the
presence of increasing amounts of noise. 

\begin{figure}[h]
\begin{center}
\includegraphics[width=.85\textwidth]{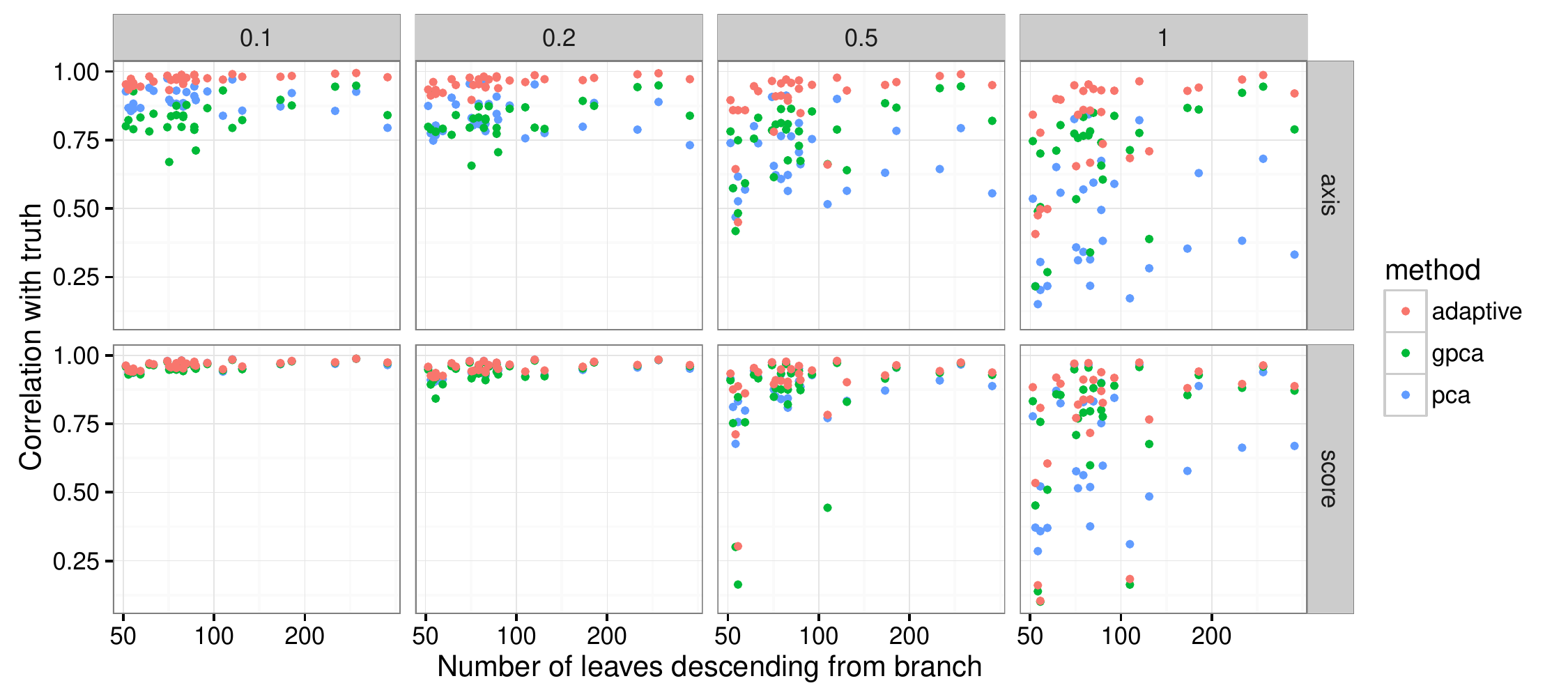}
\end{center}
\caption{Results from simulation B. Correlations between the true and
  estimated principal axis (top) and the true and estimated scores
  along the principal axis (bottom) for different levels of noise
  variance (columns labeled by noise variance). For each simulation,
  the principal axis is non-zero on all the leaves descending from a
  certain branch in the tree, and the $x$-axis gives the number of
  non-zero elements. }\label{Fig:SimB}
\end{figure}

In both of these simulations, the principal axes are structured
according to the tree in some sense, but in neither case is the data
generated according to the exact data model described in Section
\ref{Sec:agpcadatamodel}. This suggests that the method is not overly
dependent on the data coming from the exact model which was used to
motivate it and can perform well in a variety of situations.

\section{Real data example}\label{Sec:realdata}

To illustrate the method on real data, we return to the data set
described in Section \ref{Sec:motivation}. To review, the goal of the
study was to understand the effect of antibiotics on the gut
microbiome, and to this end fecal samples were taken from three
subjects over the course of several months, during which time each of
the subjects took two courses of the antibiotic
Ciprofloxacin. Bacterial abundances in the fecal samples were measured
using the procedure described in Section
\ref{Sec:motivation}. Measurements were made before the first course
of Cipro (called Pre Cp), during the first course of antibiotics (1st
Cp), in the week after the first course of antibiotics (1st WPC), more
than one week after the first course of antibiotics and before the
second course (Interim), during the second course of antibiotics (2nd
Cp), in the first week after the second course of antibiotics (2nd
WPC), and after that (Post Cp). For each of the samples we have the
abundances of approximately 2000 bacterial species and a tree
describing the phylogenetic relationships between the bacteria. We
looked at the results from adaptive gPCA, DPCoA, and standard PCA on
this data set. In adaptive gPCA, the similarity matrix $Q$ used to
incorporate the phylogeny is formed in the same way as for the
simulations (see equation (\ref{Eq:Qdef})) so that $Q_{ij}$ gives the
amount of shared ancestry between species $i$ and $j$.

\begin{figure}
\begin{center}
\includegraphics[width=\textwidth]{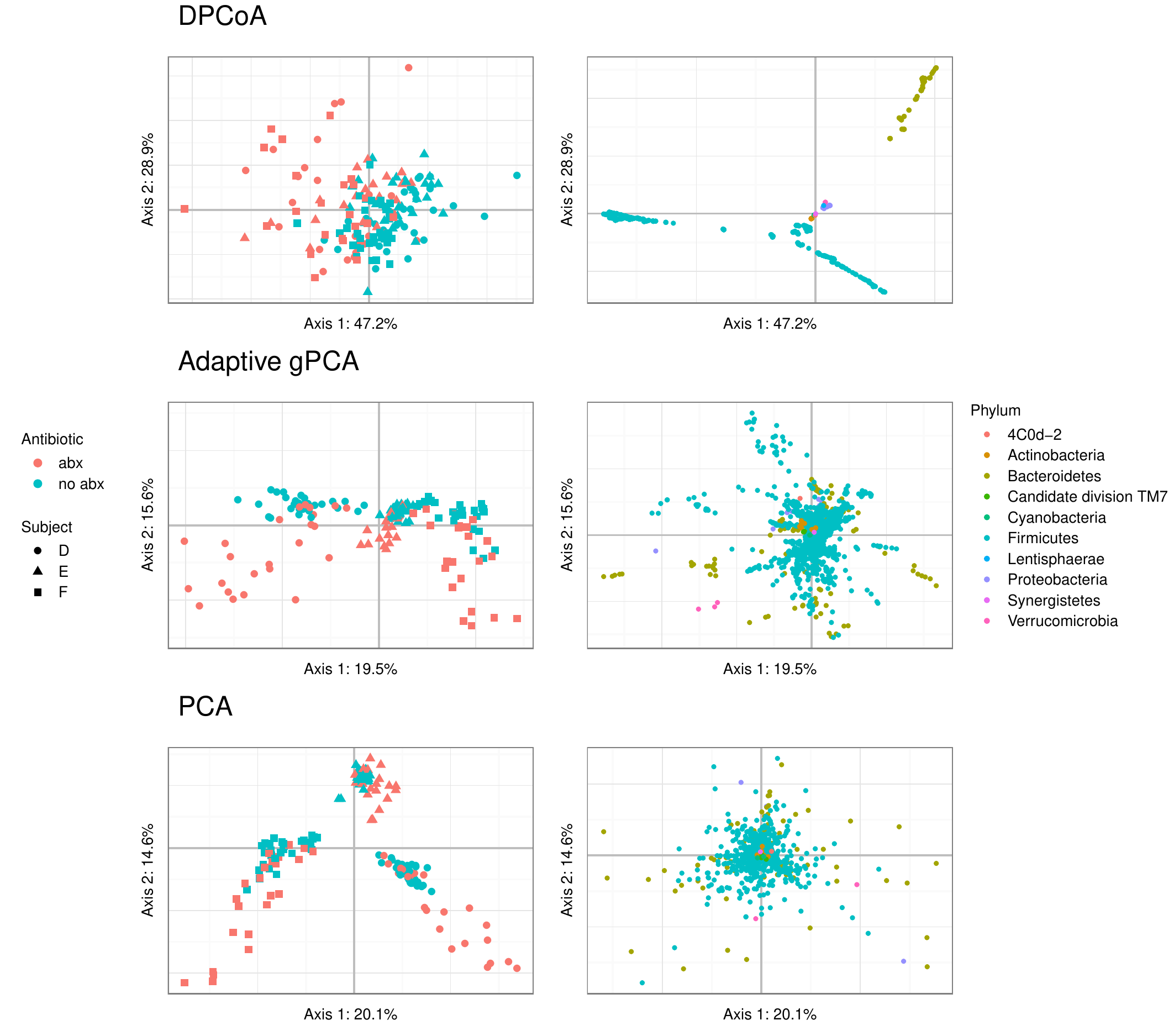}
\caption{Sample (left) and species (right) plots for DPCoA (top),
  adaptive gPCA (middle), and standard PCA (bottom). Colors in the
  sample plots represent a binning of the sample points into abx
  (either when the subject was on antibiotics or the week immediately
  after) or no abx (all other times). The colors in the species plots
  represent phyla. }\label{Fig:Antibiotic}
\end{center}
\end{figure}

Figure \ref{Fig:Antibiotic} shows the results of using the three
methods on this data set. The top pair of plots shows the results from
DPCoA, the middle from adaptive gPCA, and the bottom from standard
PCA. In each pair, the left-hand plot shows the sample scores on the
first and second principal axes, and the right-hand plot shows the
variable loadings on the first and second principal axes. All the
pairs of plots can be interpreted as biplots, so if a sample has a
large score on e.g. the first principal axis, we expect it to have
larger values for variables which have large loadings on the first
principal axis. 

The three methods give us quite different results. Just considering
the sample points to start with, in the DPCoA representation we see
some difference in the samples taken while the subjects were on
antibiotics compared with the others, but we see very little
difference between samples from the different subjects. PCA and
adaptive gPCA show complete separation between the samples from the
different subjects and a good degree of offset between the samples
taken while the subjects were on antibiotics compared with the
rest. It turns out that the second adaptive gPCA axis describes the
antibiotic perturbation very well: if we plot the scores along the
second axis over time, we see that the scores are stable when the
subjects are not on antibiotics, drop upon administration of the
antibiotic, and return to baseline when the antibiotic is stopped (see
Figure \ref{Fig:timecourse}).

\begin{figure}[t]
\begin{center}
\includegraphics[width=.8\textwidth]{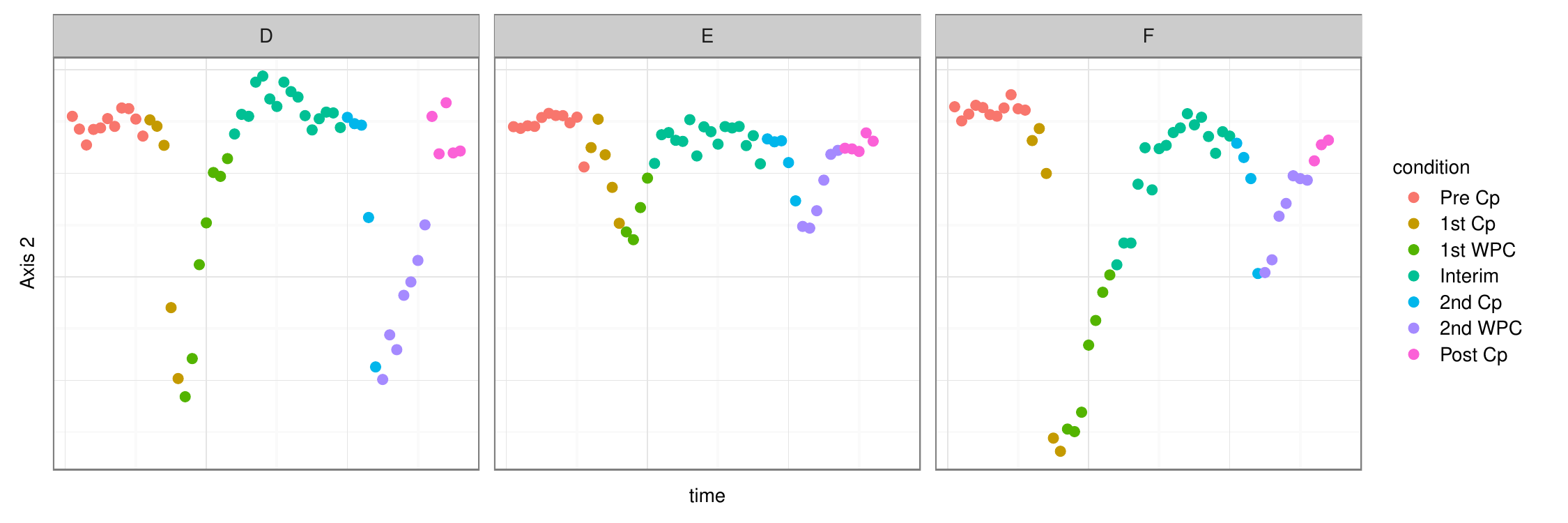}
\end{center}
\caption{A plot of the scores along the second axis from adaptive gPCA
  by time, plotted for each of the three individuals. We see very
  clearly that this axis is capturing species that change during the
  administration of the antibiotic but which are stable otherwise. The
  corresponding plots for PCA and DPCoA are much less
  compelling. }\label{Fig:timecourse}
\end{figure}

Turning next to the variable (species) loadings on the principal axes,
we see that the species points from PCA show no association with the
tree: species which are similar phylogenetically are no more likely to
have similar loadings on the principal axes than species which are
phylogenetically dissimilar. On the other end of the spectrum, the
species points from DPCoA are very related to the tree, and in
particular have loadings which are related to the deep branching
structure of the tree. We see this in the fact that the species points
from the two dominant phyla occupy completely disjoint areas in the
variable space. Adaptive gPCA gives results somewhere in the middle:
we see that species which are phylogenetically similar are more likely
to have similar loadings on the principal axes, but the phenomenon is
more local. Whereas in DPCoA, we have very large groups of similar
species with similar loadings on the principal axes (the two large
phyla), in adaptive gPCA we get smaller groups of similar species
having similar loadings on the principal axes.

Since the purpose of the study was to understand the effect of
antibiotics on the gut microbiome and since the second adaptive gPCA
axis seems to describe the disturbance due to the antibiotic, we can
look in more detail at the behavior of the species with large positive
or negative loadings on the second adaptive gPCA axis. The 27 species
with the largest positive scores along the second adaptive gPCA axis
are all of the genus {\em Faecalibacterium} (and in fact there are 28
members of this genus represented in the data set so this is nearly
the entire genus). Although different members of the genus are present
or absent in different subjects, when present they all show the same
pattern of declining in relative abundance during the treatment with
antibiotics and rebounding when the treatment is discontinued. This is
shown in the top row of Figure \ref{Fig:selectedotus}. Consistent with
what we see in Figure \ref{Fig:timecourse}, Subject E shows much less
of a disturbance compared to subjects D and F, and the disturbance in
F corresponding to the second course of antibiotics is much smaller
than that corresponding to the first.

Similarly, if we look at the 21 members of the Firmicutes phylum with
the largest negative scores along the second adaptive gPCA axis, we
see a similar phenomenon. Only 11 of the 21 species in this group are
classified at the genus level, but those 11 are all classified as {\em
  Blautia}, and all 21 species are classified at the family level as
Lachnospiraceae. These species tend to be even more subject specific
than those discussed above, with each species usually present in large
numbers in only one subject. However, when a species in this group is
present in a subject, its relative abundance tends to increase when
the antibiotic is administered and falls back to baseline when the
treatment is discontinued (shown in the bottom row of Figure
\ref{Fig:selectedotus}). This shows us another advantage of using a
method which incorporates phylogenetic information: Instead of having
a long list of species which may only be present in one subject and
whose behavior may not generalize to other individuals, we have a
clade whose members, when present, increase in relative abundance with
the administration of Cipro. This is a much more parsimonious
conclusion than that drawn from a list of unrelated taxa, and it is
straightforward to reason about and to test in later experiments.

\begin{figure}
\includegraphics[width=\textwidth]{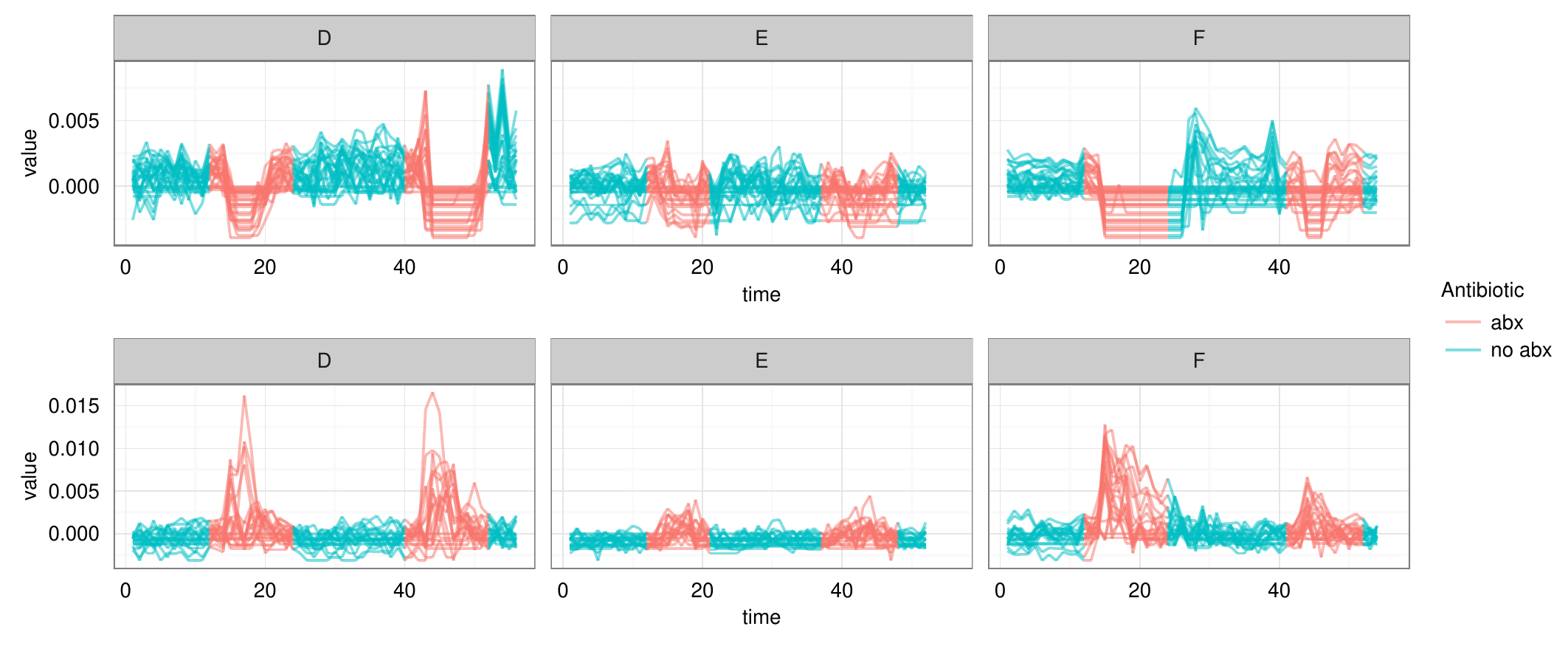}
\caption{Normalized abundances for two groups of species. Each line
  represents a species, each facet represents a subject. The top row
  shows the normalized abundances of each of 27 OTUs with the largest
  positive loadings on the second adaptive gPCA axis, and the bottom
  row shows the normalized abundances of the 21 Firmicutes with the
  largest negative loadings on the second adaptive gPCA axis. In
  general, the species with positive scores on the second axis see
  their relative abundances decline with the antibiotic treatment
  while the species with negative scores see their relative abundances
  increase. The sizes of the disturbances are consistent with what we
  see in Figure \ref{Fig:timecourse} (e.g. E has the smallest
  disturbance and the second antibiotic treatment for F leads to a
  smaller disturbance than the first). }\label{Fig:selectedotus}
\end{figure}

The results of this analysis show us some of the drawbacks of DPCoA
and standard PCA compared with adaptive gPCA. With standard PCA the
axes are difficult to interpret because of the lack of relationship
between the phylogenetic structure and the loadings of the variables
on the principal axes. DPCoA misses much of the true latent structure
in the data (it shows almost no subject effect and a smaller
antibiotic effect than either adaptive gPCA or standard PCA), which is
consistent with the simulations showing that DPCoA only performs well
in very limited situations. Adaptive gPCA recovers the latent
structure well and also has axes which are interpretable in terms of
small groups of related species. This sort of structure is useful to
scientists interested in understanding the underlying biology, and
looking in more detail at the groups of species associated with the
axes can give us insight into this biology and ideas about what steps
to take next.

\section{Conclusion}


In this paper, we have presented a method for creating low-dimensional
representations of a data matrix while taking into account side
information about the relationships between the variables. This is
done in a natural way by using a prior encoding the relationships
between the variables and performing PCA on the resulting posteriors,
taking into account the fact that the posteriors have non-spherical
variance. We show that performing PCA on the posterior estimates
obtained with this prior corresponds to a generalized PCA, with a
one-dimensional family of gPCAs arising from varying the prior
strength. A member of this family can then be picked by estimating the
scalings of the prior and the noise by maximum marginal likelihood. We
call the gPCA obtained in this manner adaptive gPCA.

Adaptive gPCA leads to a low-dimensional representation of the
samples. The loadings of similar variables along the principal axes in
this representation will be similar to each other, allowing the axes
to be more interpretable than in standard PCA. The effect is therefore
similar to what we would obtain by using PCA with a fused lasso
penalty on the variable loadings, but the motivation and derivation
are different, and our method is able to accommodate more general
variable structures than the fused lasso. Other attractive features of
our method are that we can obtain the global solution without worrying
about the algorithm being stuck in a local minimum and that we can
choose the amount of regularization to perform without having to
resort to potentially time-consuming cross-validation.

Using adaptive gPCA on a real data set shows us some of the advantages
of the method: we were able to identify the latent structure in the
data (the differences between the individuals and the antibiotic
treatment), and we were able to use the loadings of the variables on
the principal axes to understand the biology behind this latent
structure. For instance, the second adaptive gPCA axis was related to
the administration of the antibiotic, and examining the loadings of
the species along the second axis gave us groups of closely-related
species which share the same behavior upon administration of the
antibiotic. The implicit smoothing done by adaptive gPCA is helpful
here because not all the members of each group of species identified
by adaptive gPCA are present in each sample, but nonetheless the
members of the groups have similar behaviors when they are present.

It is also possible to extend adaptive gPCA in a number of
directions. If we have information about the precision with which
different variables or samples are measured, it is easy to incorporate
either sample of variable weights into the analysis. The family of
inner products described in this paper can also be used with other
methods which work in non-standard inner product spaces, such as
between- or within-class analysis (\cite{dray2015considering}), to
encourage loading vectors which are smooth according to the structure
of the variables. It can also be used in conjunction with formulations
for sparse gPCA (\cite{allen2014generalized}) to obtain
low-dimensional representations of the variables which are both sparse
and structured, and combining this further with between-class analysis
would yield a method for supervised learning with sparse and
structured variable loadings.

An R implementation of adaptive gPCA is available at
\begin{verbatim}
www.github.com/jfukuyama/adaptiveGPCA
\end{verbatim}
and can be installed in R with the command
\begin{verbatim}
devtools::install_github("jfukuyama/adaptiveGPCA")
\end{verbatim}
The package allows for either the automatic selection procedure
described in Section \ref{Sec:adaptive} or for manual
selection. Manual selection is mediated by a shiny gadget
\cite{shiny}, which provides an interactive plot with a slider bar
allowing the user to move easily between visualizations corresponding
to different prior strengths. The package also includes the antibiotic
data used in this paper and a vignette which reproduces the analysis.

\appendix

\section{Proof of Theorem 1}\label{app1}

The posterior distribution of all of the
$\mathbf \mu_i$'s given the data follows a matrix normal distribution
$\mathcal{MN}_{n \times p} (\sigma_2^{-2}XS, I_n, S)$. Therefore,
following the structured error interpretation of gPCA, to take into
account the error structure we should perform gPCA on the triple
$(\sigma_2^{-2} XS, S^{-1}, I_n)$. Since we are interested in the
low-dimensional representation of the samples and variables, the
scaling is not important and going forward we will drop the
$\sigma_2^{-2}$ factor and consider gPCA on $(XS, S^{-1}, I_n)$. 

Now, note that the sample scores obtained by gPCA on
$(XS, S^{-1}, I_n)$ are the same as those obtained by gPCA on
$(X, S, I_n)$, as can be verified by plugging both sets of variables
into the optimization problem in (\ref{Eq:gpca_scores}). The principal
axes from $(XS, S^{-1}, I_n)$ are equal to the principal axes from
$(X, S, I_n)$ transformed by $S$. To see the equivalence, note that
for the principal axes from the triple $(XS, S^{-1}, I)$, we need to
solve the problem
\begin{align}
\max_{\tilde v_i \in \R^p} & \quad \tilde v_i^T X^T X \tilde v_i ,
                             \quad i = 1,\ldots, k \label{Eq:bayes_pca_1}\\
\text{s.t. }& \quad \tilde v_i^T S^{-1} \tilde v_i = 1, \quad i = 1,\ldots, k\nonumber \\
& \quad \tilde v_i^T S^{-1} \tilde v_j = 0, \quad 1 \le i < j \le k\nonumber
\end{align}
For the principal axes on the triple $(X, S, I)$, we need to solve
\begin{align}
\max_{v_i \in \R^p} & \quad v_i^T S X^T X S v_i , \quad i = 1,\ldots,
                      k \label{Eq:bayes_pca_2}\\
\text{s.t. }& \quad v_i^T S v_i = 1, \quad i = 1,\ldots, k \nonumber \\
& \quad  v_i^T S v_j = 0, \quad 1 \le i < j \le k \nonumber
\end{align}
Now if we make the change of variables $\tilde v_i = S v_i$, problems
(\ref{Eq:bayes_pca_1}) and (\ref{Eq:bayes_pca_2}) are the same.

\section{Proof of Theorem 2}\label{app}

The proof here follows almost exactly from \cite{purdom2011analysis},
but I am including a full proof for completeness.

First some notation. For a weight vector $w$ satisfying
$w^T \mathbf 1 = 1$, let $P_w = I - \mathbf 1 w^T$ represent the
weighted centering operator, and let $D_w$ be the diagonal matrix with
$w$ along the diagonal. We have $p$ variables measured on $n$
samples. Let $C \in \R^{n \times p}$ be our original data matrix with
$C_{ij}$ containing the count of variable $j$ for sample $i$, and let
$w_L$ and $w_S$ denote sample and variable weights,
respectively. These are obtained by normalizing the row sums and
column sums, so $w_L = C \mathbf 1 / \mathbf 1^T C \mathbf 1$ and
$w_S = C^T \mathbf 1 / \mathbf 1^T C \mathbf 1$. Let
$X \in \R^{n \times p}$ be the matrix with frequency profiles for each
sample, so $X = D_{w_L}^{-1}C$. Finally, let the matrix
$\delta \in \R^{p \times p}$ contain the squared Euclidean distances
between variables. We are assuming that these distances are Euclidean.

\subsubsection*{DPCoA}
For the first step of DPCoA, we get the variable locations from
classical multi-dimensional scaling. The weighted version of
multi-dimensional scaling is obtained by finding the
eigendecomposition of
$D_{w_S}^{1/2} P_{w_S} (-\delta / 2) P_{w_S}^T D_{w_S}^{1/2}$. Then we
have
\begin{align}
U \Lambda U^T &= D_{w_S}^{1/2} P_{w_S} (-\delta /2) P_{w_S}^T
                D_{w_S}^{1/2}\\
Z &= D_{w_S}^{-1/2} U \Lambda^{1/2} \\
Y &= XZ
\end{align}
$Z$ is then a matrix in $\R^{p \times d}$ ($d$ the dimension of the
space the points are embedded in, $d < p$) containing the coordinates
of the variable points given by multi-dimensional scaling. Since the
rows of $X$ contain the frequencies of the variables at each location,
the rows of $Y$ contain the barycenters of the variable clouds
corresponding to each sample.

The second step of DPCoA, now that we have the barycenters of each
sample in $Y$, is to do PCA on the triple $(Y, I, D_{w_L})$. This
means we have to solve
\begin{align}
Y^T D_{w_L} Y M &= M \Lambda & M^T M &= I\\
Y Y^T D_{w_L} L &= L \Lambda & L^T D_{w_L} L &= I \label{conditionsL}
\end{align}
The sample scores are then found in $L \Lambda^{1/2}$ and the variable
scores are found in $ZM$. We can rewrite the first equation in
(\ref{conditionsL}) as
\begin{align}
L \Lambda &= YY^T D_{w_L} L\\
 &= XZZ^T X^T D_{w_L}L \\
&= X D^{-1/2}_{w_S} U \Lambda U^T D_{w_S}^{-1/2} X^T D_{w_L}L \\
&= X P_{w_S} (-\delta / 2) P_{w_S}^T X^T D_{w_L} L
\end{align}
Then since $P_{w_S}$ is a projection operator, $P_{w_S} =
P_{w_S}P_{w_S}$ and so the previous line can be rewritten as
\begin{align}
L \Lambda = X P_{w_S} (-P_{w_S} \delta P_{w_S}^T / 2)
  P_{w_S}^TX^T D_{w_L } L\label{conditionsLrewritten}
\end{align}

\subsubsection*{Generalized PCA}

First of all, recall that since $X$ is a contingency table, centering
$X$ by row and centering $X$ by column are the same, $XP_{w_S}=
P_{w_L}^T X$. Call this centered matrix $\tilde X$. 

Now consider generalized PCA of the triple $(\tilde X, Q, D_{w_L})$
where $\tilde X$ is a column-centered version of $X$, so $\tilde X = X
P_{w_S}$ and $Q = P_{w_S}(-\delta / 2) P_{w_S}^T$. The equations that
need to be satisfied for this gPCA are 
\begin{align}
\tilde X^T D_{w_L} \tilde X Q A &= A \Psi & A^T Q A &= I\\
\tilde X Q \tilde X^T D_{w_L} B &= B \Psi  & B^T D_{w_L}B &= I \label{conditionsB}
\end{align}
$B \Psi^{1/2}$ contains the sample scores from the gPCA. By comparing
line (\ref{conditionsB}) and line (\ref{conditionsLrewritten}), we see
that the conditions for the pair $B, \Psi$ and the pair $L, \Lambda$
are the same, and so the sample scores from gPCA and DPCoA are the
same up to a sign change.

Then the variable scores given by DPCoA are given by $ZM$. The
generalized SVD tells us that $Y = L \Lambda^{1/2} M^T$, which, along
with $M^T M = I$ and $L^T D_{w_L}L = I$ implies that
$M^T = \Lambda^{-1/2} L^T D_{w_L}Y$. The generalized SVD of $\tilde X$
is $\tilde X = B \Psi^{1/2} A^T$, which, along with the corresponding
orthogonality conditions, implies that
$\tilde X Q A \Psi^{-1} = B \Psi^{-1/2}$. Then we can rewrite the
variable scores $ZM$ as
\begin{align}
ZM &= Z Y^T D_{w_L} L \Lambda^{-1/2}\\
&= ZZ^T X^T D_{w_L} L \Lambda^{-1/2} \\
&= D_{w_S}^{-1/2} U \Lambda U^T D_{w_S}^{-1/2} X^T D_{w_L} L
  \Lambda^{-1/2}\\
&= P_{w_S}(-\delta / 2) P_{w_S}^T X^T D_{w_L} B \Psi^{-1/2} \\
&= Q \tilde X^T D_{w_L} B \Psi^{-1/2} \\
&= Q \tilde X^T D_{w_L}\tilde X Q A \Psi^{-1} \\
&= QA
\end{align}
So we can get the variable scores from DPCoA by multiplying the
variable scores from gPCA by $Q$.

\bibliographystyle{natbib}
\bibliography{adaptivegpca}

\end{document}